\newenvironment{acks}{\section*{Acknowledgments}}{}
\theoremstyle{plain}
\newtheorem{theorem}{Theorem}
\newtheorem{lemma}{Lemma}
\theoremstyle{definition}
\let\oldnl\nl%
\newcommand{\nonl}{\renewcommand{\nl}{\let\nl\oldnl}}%
\newcommand\Left{\mathit{left}}
\newcommand\Right{\mathit{right}}
\newcommand\Key{\mathit{key}}
\newcommand\Root{\mathit{root}}
\newcommand\Rank{\mathit{rank}}
\newcommand\Null{\mathit{null}}
\title{Zip Trees}
\author{
  Robert E. Tarjan\footnote{Department of Computer Science, Princeton University, and Intertrust Technologies; \mbox{ret@cs.princeton.edu}.}\\
  \and
  Caleb C. Levy\footnote{Sunshine; \mbox{caleb.levy@gmail.com}.}\\
  \and
  Stephen Timmel\footnote{Mathematics Department, Virginia Polytechnic Institute and State University; \mbox{stimmel@vt.edu}.}\\
}
\date{}
\begin{document}
\maketitle
\begin{abstract}
We introduce the \emph{zip tree},\footnote{\emph{Zip}: ``To move very fast.''}
a form of randomized binary search tree that integrates previous ideas into one
practical, performant, and pleasant-to-implement package. A zip tree is a
binary search tree in which each node has a numeric rank and the tree is
(max)-heap-ordered with respect to ranks, with rank ties broken in favor of
smaller keys. Zip trees are essentially treaps \cite{Aragon}, except that ranks
are drawn from a geometric distribution instead of a uniform distribution, and
we allow rank ties. These changes enable us to use fewer random bits per node.

We perform insertions and deletions by unmerging and merging paths
(\emph{unzipping} and \emph{zipping}) rather than by doing rotations, which
avoids some pointer changes and improves efficiency. The methods of zipping and
unzipping take inspiration from previous top-down approaches to insertion and
deletion by Stephenson \cite{Stephenson}, Mart\'inez and Roura \cite{Martinez},
and Sprugnoli \cite{Sprugnoli}.

From a \emph{theoretical} standpoint, this work provides two main
results. First, zip trees require only $O(\log \log n)$ bits (with high
probability) to represent the largest rank in an $n$-node binary search tree;
previous data structures require $O(\log n)$ bits for the largest rank. Second,
zip trees are naturally isomorphic to skip lists \cite{Pugh}, and simplify Dean
and Jones' mapping between skip lists and binary search trees \cite{Dean}.
\end{abstract}

\section{Introducing: Zip Trees}\label{sec:Introduction}

\subsection{Preliminaries}

A \emph{binary search tree} is a binary tree in which each node contains an
item, each item has a key, and the items are arranged in \emph{symmetric
order}: if $x$ is a node, all items in the left subtree of $x$ have keys less
than that of $x$, and all items in the right subtree of $x$ have keys greater
than that of $x$. Such a tree supports binary search: to find an item in the
tree with a given key, proceed as follows. If the tree is empty, stop: no item
in the tree has the given key. Otherwise, compare the desired key with that of
the item in the root. If they are equal, stop and return the item in the root.
If the given key is less than that of the item in the root, search recursively
in the left subtree of the root. Otherwise, search recursively in the right
subtree of the root. The path of nodes visited during the search is the
\emph{search path}. If the search is unsuccessful, the search path starts at
the root and ends at a missing node corresponding to an empty subtree.

To keep our presentation simple, in this and the next section we do not
distinguish between an item and the node containing it. (The data structure is
\emph{endogenous} \cite{Tarjan}.) We also assume that all nodes have distinct
keys. It is straightforward to eliminate these assumptions. We call a node
\emph{binary}, \emph{unary}, or a \emph{leaf}, if it has two, one or zero
children, respectively. We define the \emph{depth} of a node recursively to be
zero if it is the root, or one plus the depth of its parent if not. We define
the \emph{height} of a node recursively to be zero if it is a leaf, or one plus
the maximum of the heights of its children if not. The \emph{left} (resp.
\emph{right}) \emph{spine} of a node $x$ is the path from $x$ through left (resp.
\emph{right}) children to the node of smallest (resp. largest) key in the subtree rooted at
$x$. The \emph{left} (resp. \emph{right}) \emph{spine} of a tree is the left
(resp. right) spine of its root. We represent a binary search tree by storing
in each node $x$ its left child $x.\Left$, its right child $x.\Right$, and its
key, $x.\Key$. If $x$ has no left (resp. right) child, $x.\Left = \Null$ (resp.
$x.\Right = \Null$).

\subsection{Intuition}

Our goal is to obtain a type of binary search tree with small depth and small
update time, one that is as simple and efficient as possible. If the number of
nodes $n$ is one less than a power of two, the binary tree of minimum depth is
\emph{perfect}: each node is either binary (with two children) or a leaf (with
no children), and all leaves are at the same depth. But such trees exist only
for some values of $n$, and updating even an almost-perfect tree (say one in
which all non-binary nodes are leaves and all leaves have the same depth to
within one) can require rebuilding much or all of it.

We observe, though, that in a perfect binary tree the fraction of nodes of
height $k$ is about $1/2^{k+1}$ for any non-negative integer $k$. Our idea is
to build a good tree by assigning heights to new nodes according to the
distribution in a perfect tree and inserting the nodes at the corresponding
heights.

We cannot do this exactly, but we can do it to within a constant factor in
expectation, by assigning each node a random rank according to the desired
distribution and maintaining heap order by rank. Thus we obtain zip trees.

\subsection{Definition of Zip Trees}

A \emph{zip tree} is a binary search tree in which each node has a numeric
\emph{rank} and the tree is (max)-heap-ordered with respect to ranks, with ties
broken in favor of smaller keys: the parent of a node has rank greater than
that of its left child and no less than that of its right child. We choose the
rank of a node randomly when the node is inserted into the tree. We choose node
ranks independently from a geometric distribution with mean $1$: the rank of a
node is non-negative integer $k$ with probability $1/2^{k+1}$. We denote by
$x.\Rank$ the rank of node $x$. We can store the rank of a node in the node or
compute it as a pseudo-random function of the node (or of its key) each time it
is needed. The pseudo-random function method, proposed by Aragon and Seidel
\cite{Aragon}, avoids the need to store ranks but requires a stronger
independence assumption for the validity of our efficiency bounds, as we
discuss in Section \ref{sec:Properties}.

To insert a new node $x$ into a zip tree, we search for $x$ in the tree until
reaching the node $y$ that $x$ will replace, namely the node $y$ such that
$y.\Rank \le x.\Rank$, with strict inequality if $y.\Key < x.\Key$. From $y$,
we follow the rest of the search path for $x$, \emph{unzipping} it by splitting
it into a path $P$ containing each node with key less than $x.\Key$ and a path
$Q$ containing each node with key greater than $x.\Key$. Along $P$ from top to
bottom, nodes are in increasing order by key and non-increasing order by rank;
along $Q$ from top to bottom, nodes are in decreasing order by both rank and
key. Unzipping preserves the left subtrees of the nodes on $P$ and the right
subtrees of the nodes on $Q$. We make the top node of $P$ the left child of $x$
and the top node of $Q$ the right child of $x$. Finally, if $y$ had a parent
$z$ before the insertion, we make $x$ the left or right child of $z$ depending
on whether its key is less than or greater than that of $z$, respectively ($x$
replaces $y$ as a child of $z$); if $y$ was the root before the insertion, we
make $x$ the root.

Deletion is the inverse of insertion. To delete a node $x$, we do a search to
find it. Let $P$ and $Q$ be the right spine of the left subtree of $x$ and the
left spine of the right subtree of $x$. \emph{Zip} $P$ and $Q$ to form a single
path $R$ by merging them from top to bottom in non-increasing rank order,
breaking a tie in favor of the smaller key. Zipping preserves the left subtrees
of the nodes on $P$ and the right subtrees of the nodes on $Q$. Finally, if $x$
had a parent $z$ before the insertion, make the top node of $R$ (or $\Null$ if
$R$ is empty) the left or right child of $z$, depending on whether the key of
$x$ is less than or greater than that of $z$, respectively (the top node of $R$
replaces $x$ as a child of $z$); if $x$ was the root before the insertion, make
the top node of $R$ the root. Figure \ref{fig:InsertDelete} illustrates both an
insertion and a deletion in a zip tree.

\begin{figure}
\centering
\includegraphics[width=\textwidth]{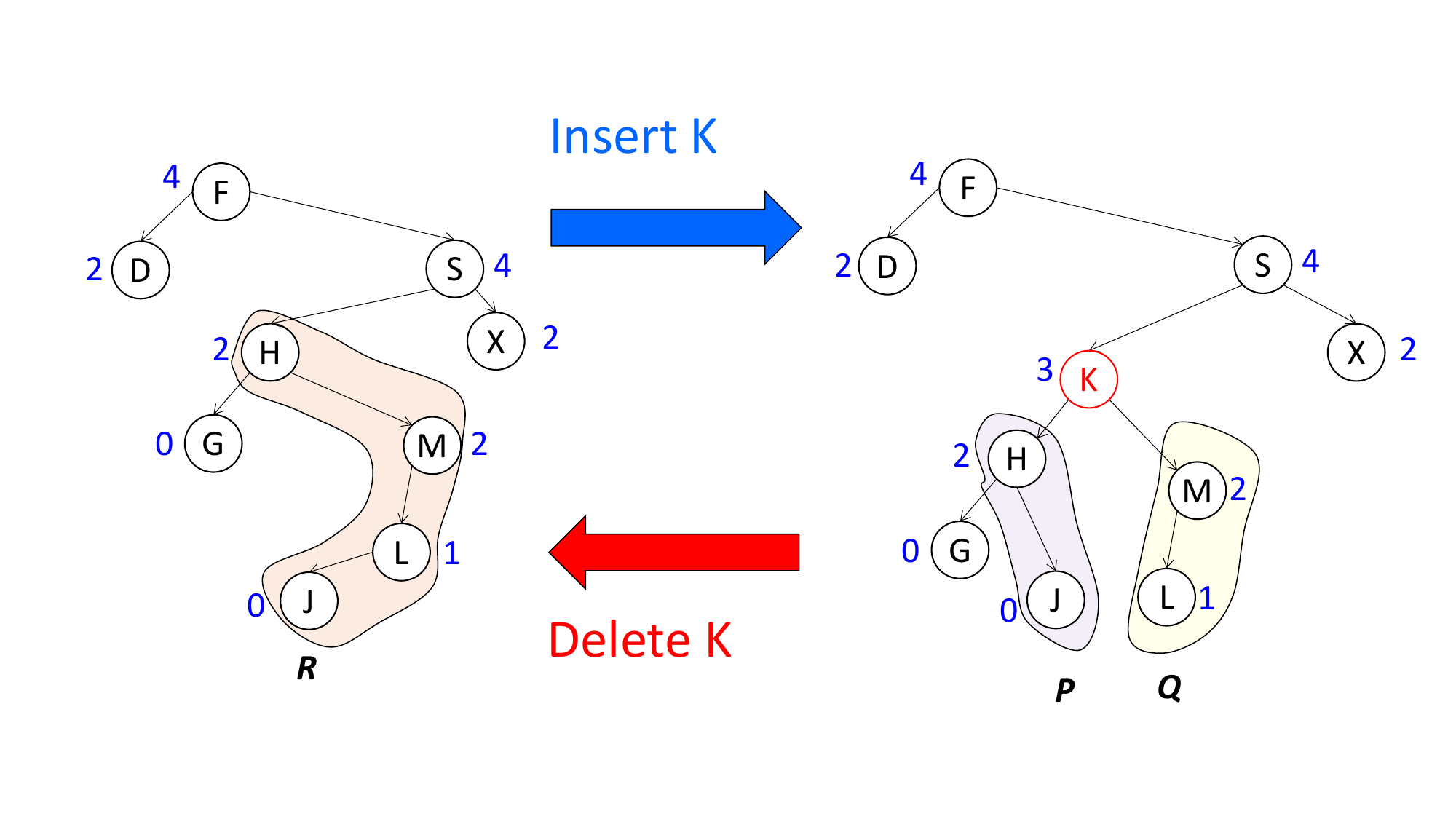}
\caption{Insertion and deletion of a node with key ``K'' assigned rank
$3$.}\label{fig:InsertDelete}
\end{figure}

An insertion or deletion requires a search plus an unzip or zip. The time for
an unzip or zip is proportional to one plus the number of nodes on the unzipped
path in an insertion or one plus the number of nodes on the two zipped paths in
a deletion.

\section{Related Work}\label{sec:PreviousWork}

Zip trees closely resemble two well-known data structures: the treap of Seidel
and Aragon \cite{Aragon} and the skip list of Pugh \cite{Pugh}.

\subsection{Treaps}

A \emph{treap} is a binary search tree in which each node has a real-valued
random rank (called a \emph{priority} by Seidel and Aragon) and the nodes are
max-heap ordered by rank. The ranks are chosen independently for each node from
a fixed, uniform distribution over a large enough set that the probability of
any rank tie is small. Insertions and deletions are done using \emph{rotations}
to restore heap order. A rotation at a node $x$ is a local transformation that
makes $x$ the parent of its old parent while preserving symmetric order. See
Figure~\ref{fig:Rotation}. In general a rotation changes three children.

\begin{figure}
\centering
\includegraphics[width=0.8\columnwidth,trim={.15in 1.3in .25in
2in},clip]{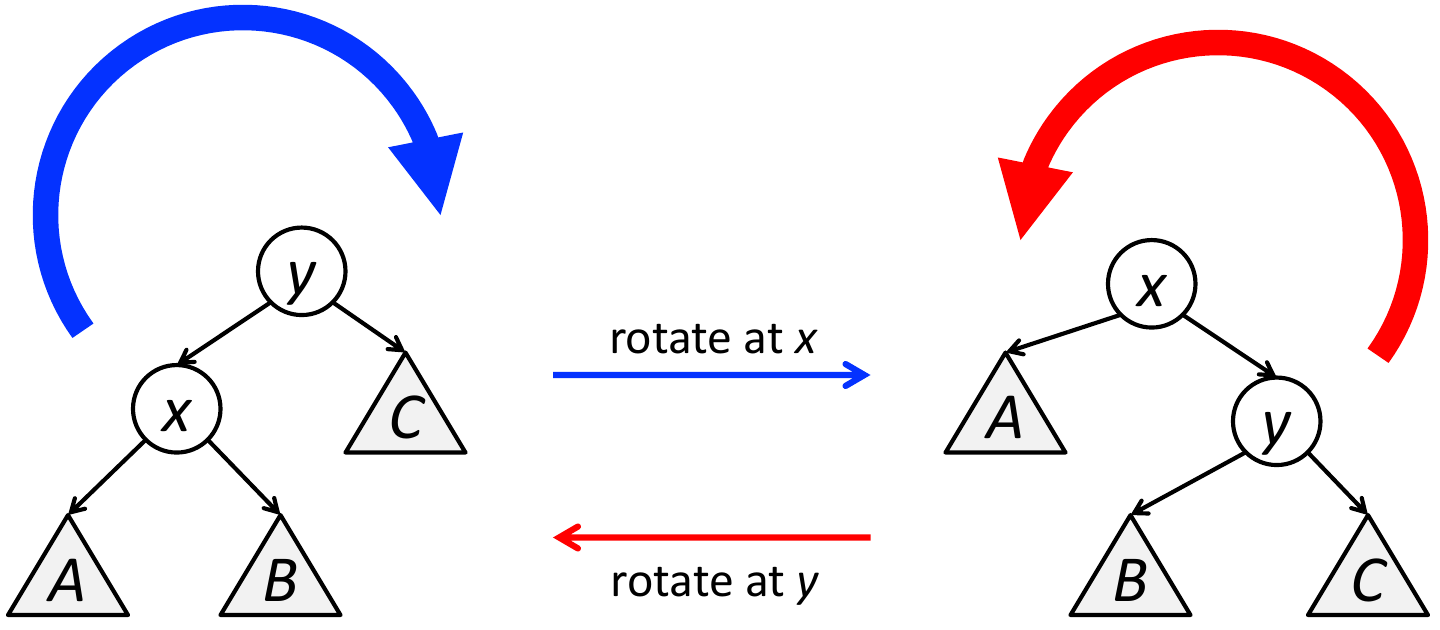}
\caption{Rotation at node $x$ with parent $y$, and reversing the effect by
rotating at $y$. Triangles denote subtrees.}\label{fig:Rotation}
\end{figure}

To insert a new node $x$ in a treap, we generate a rank for $x$, follow the
search path for $x$ until reaching a missing node, replace the missing node by
$x$, and rotate at $x$ until its parent has larger rank or $x$ is the root. To
delete a node $x$ in a treap, while $x$ is not a leaf, we rotate at whichever
of its children has higher rank. %
Once $x$ is a leaf or a unary node, we replace it by its child if it
has one or by a missing node if not.

One can view a zip tree as a treap but with a different choice of ranks and
with different insertion and deletion algorithms.\footnote{Seidel and Aragon
\cite{Aragon} hint at the possibility of doing insertions and deletions by
unzipping and zipping: in a footnote they say, ``In practice it is preferable
to approach these operations the other way around. Joins and splits of treaps
can be implemented as iterative top-down procedures; insertions and deletions
can then be implemented as accesses followed by splits or joins.'' But they
provide no further details.} Our choice of ranks reduces the number of bits
needed to represent them from $O(\log n)$ to $\lg\lg n + O(1)$ (Theorem
\ref{thm:ExpectedRank}), if ranks are stored rather than computed as a function
of the node or its key. Treaps have the same expected depth as search trees
built by uniformly random insertions, namely $2\ln n$, about $1.39\lg n$, as
compared to $1.5\lg n$ for zip trees. The results in Section
\ref{sec:Properties} correspond to results for treaps. Allowing rank ties as we
do thus costs about 8\% in average depth (and search time) but allows much more
compact representation of priorities.

A precursor of the treap is the \emph{cartesian tree} of Jean Vuillemin
\cite{Vuillemin}. This is a binary search tree built by leaf insertion (search
for the item; insert it where the search exits the bottom of the tree), with
each node having a priority equal to its position in the sequence of
insertions. Such a tree is min-heap ordered with respect to priorities, and its
distributional properties are the same as those of a treap if items are
inserted in an order corresponding to a uniformly random permutation.

\subsection{Skip Lists}

A \emph{skip list} is an alternative randomized data structure that supports
logarithmic comparison-based search. It consists of a hierarchy of sublists of
the items. The level-$0$ list contains all the items. For $k > 0$, the
level-$k$ list is obtained by independently adding each item of the
level-$(k-1)$ list with probability $1/2$ (or, more generally, some fixed
probability $p$). Each list is in increasing order by key. A dummy item with
key less than those of all real items is added to each list.

A search starts in the top-level list and proceeds through the items in
increasing order by key until finding the desired item, reaching an item of
larger key, or reaching the end of the list. In either of the last two cases,
the search backs up to the item of largest key less than the search key,
descends to the copy of this item in the next lower-level list, and searches in
this list in the same way. Eventually the search either finds the item or
discovers that it is not in the level-$0$ list.

\begin{figure}
\centering
\includegraphics[width=\textwidth]{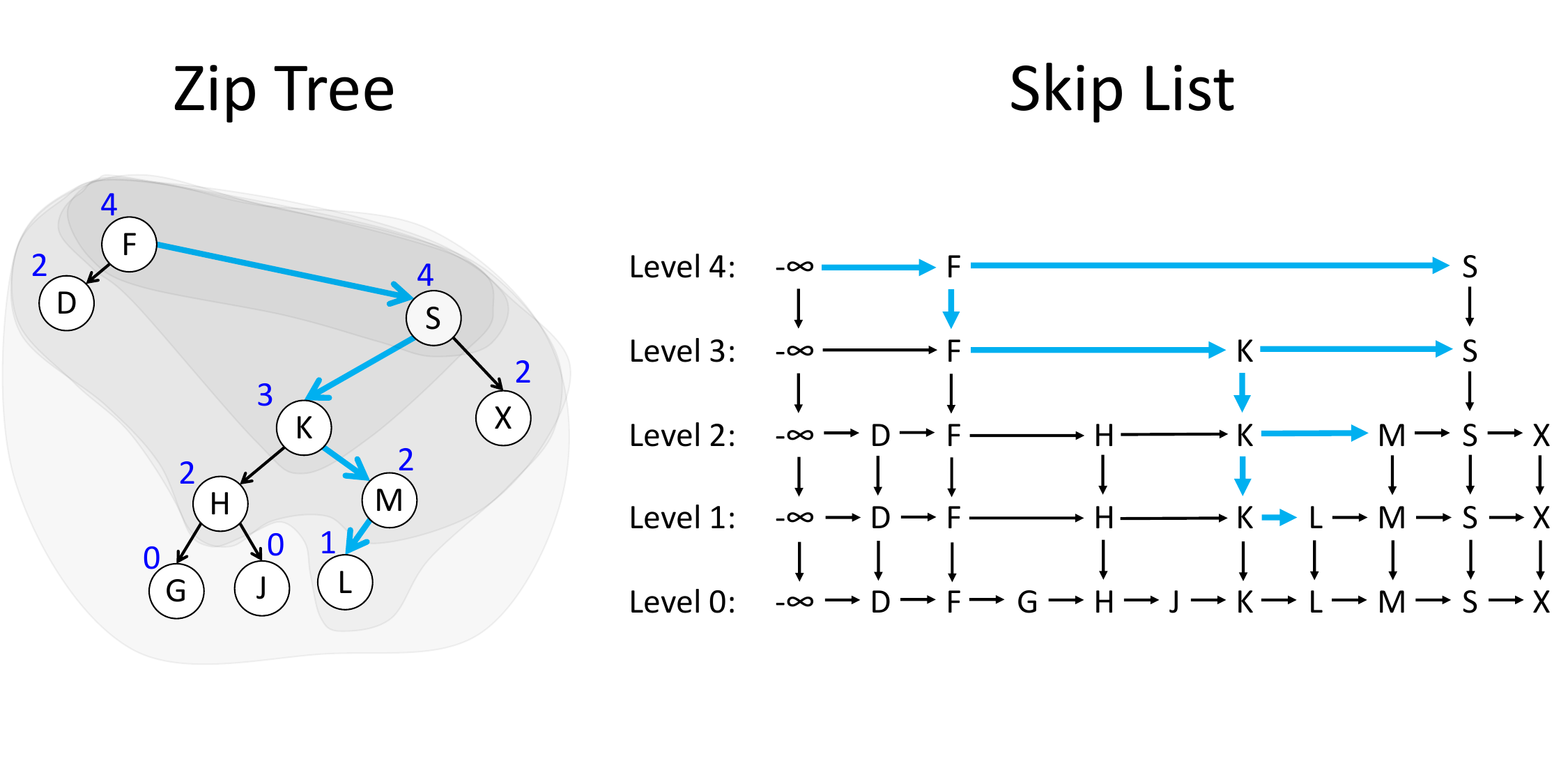}
\caption{Representation of the zip tree in Figure \ref{fig:InsertDelete} as a
skip list. The level-$k$ sublist comprises nodes of the zip tree of rank $k$ or
less. The search for $L$ traverses the blue arcs in the zip tree and in the
corresponding skip list.}\label{fig:SkipListZipTree}
\end{figure}

One can view a zip tree as a compact representation of a skip list. There is a
natural isomorphism between zip trees and skip lists. (See Figure
\ref{fig:SkipListZipTree}.) Given a zip tree, the isomorphic skip list contains
item $e$ in the level-$k$ sublist if and only if $e$ has rank at least $k$ in
the zip tree. Given a skip list, the isomorphic zip tree contains item $e$ with
rank $k$ if and only if $e$ is in the level-$k$ sublist but not in the
level-$(k+1)$ sublist. Let $e$ be an item in the zip tree with left and right
children $e'$ and $e''$, respectively. Let $e$, $e'$, and $e''$ have ranks $k$,
$k'$, and $k''$, respectively. A search in the skip list that reaches an
occurrence of $e$ will reach it first in the level-$k$ sublist. The next node
visited during the search that is not an occurrence of $e$ will be the
occurrence of $e'$ in the level-$k'$ sublist or the occurrence of $e''$ in the
level-$k''$ sublist, depending on whether the search key is less than or
greater than the key of $e$. Our rule for breaking rank ties in zip trees is
based on the search direction in skip lists: from smaller to larger keys.

A search in a zip tree visits the same items as the search in the isomorphic
skip list, except that the latter may visit items repeatedly, at lower and
lower levels. Thus a zip tree search is no slower than the isomorphic skip list
search, and can be faster. The skip list has at least as many pointers as the
corresponding zip tree, and its representation requires either variable-size
nodes, in which each item of rank $k$ has a node containing $k+1$ pointers; or
large nodes, all of which are able to hold a number of pointers equal to the
maximum rank plus one; or small nodes, one per item per level, requiring
additional pointers between levels. We conclude that zip trees are at least as
efficient in both time and space as skip lists.

Dean and Jones were the first to provide a mapping that converts a skip list
into a binary search tree \cite{Dean}, but it is not the natural isomorphism
given in the previous paragraph. They store ranks in the binary search tree in
difference form. They map the insertion and deletion algorithms for a skip list
into algorithms on the corresponding binary search tree by using rotations.

\subsection{Other Binary Search Tree Algorithms}

Mart\'inez and Roura \cite{Martinez} proposed insertion and deletion
algorithms that produce trees with the same distribution as treaps. Instead of
maintaining a heap order with respect to random priorities, they do insertions
and deletions via random rotations that depend on subtree sizes. These sizes
must be stored, at a cost of $O(\log n)$ bits per node, and they must be
updated after each rotation. This suggests using their method only in an
application in which subtree sizes are needed for some other purpose.

Doing insertions and deletions via unzipping and zipping takes at most one
child change per node on the restructured path or paths, saving a constant
factor of at least three over using rotations. Stephenson used unzipping in his
root insertion algorithm \cite{Stephenson}; insertion by unzipping is a hybrid
of his algorithm and leaf insertion. Sprugnoli \cite{Sprugnoli} was the first
to propose insertion by unzipping. He used it to insert a new node at a
specified depth, with the depth chosen randomly. His proposals for the depth
distribution are complicated, however, and he did not consider the possibility
of choosing an approximate depth rather than an exact depth. Zip trees choose
the insertion \emph{height} approximately rather than the depth, a crucial
difference.

\section{Properties of Zip Trees}\label{sec:Properties}

If we ignore constant factors, the properties of zip trees are shared with
treaps, skip lists, and the randomized search trees of Mart\'inez and Roura.
Because zip trees use a different rank distribution than treaps and Mart\'inez
and Roura's trees, and a different representation than skip lists, we reprove
a selection of these properties.

\subsection{Behavior of Node Ranks}

We denote by $n$ the number of nodes in a zip tree. To simplify bounds, we
assume that $n>1$, so $\log n$ is positive. We denote by $\lg n$ the base-two
logarithm. The following lemma extends a well-known result for trees
symmetrically ordered by key and heap-ordered by rank \cite{Aragon} to
allow rank ties:

\begin{lemma}[From \cite{Aragon}]
\label{lem:UniqueTree}
The structure of a zip tree is uniquely determined by the keys and ranks of its
nodes.
\end{lemma}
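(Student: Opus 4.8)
The plan is to proceed by induction on the number of nodes $n$, the crux being to show that the keys and ranks force a unique choice of root; once the root is pinned down, the symmetric (search-tree) order splits the remaining nodes unambiguously into the left and right subtrees, each of which is again a zip tree on a smaller node set, so the inductive hypothesis finishes the argument. The base case $n \le 1$ is immediate, since a set of at most one node admits only one tree shape.

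For the inductive step, first I would establish that along any path descending from the root the ranks are non-increasing: by the heap condition a parent's rank exceeds that of a left child and is at least that of a right child, so each downward step weakly decreases the rank. Consequently the root attains the maximum rank $r$ among all nodes. The real work is then to handle rank ties. I would argue that the set $S$ of nodes of rank $r$ lies on the right spine emanating from the root: if $w \in S$ is not the root, the root-to-$w$ path has all ranks equal to $r$ (being simultaneously $\le r$ and $\ge r$), and since descending to a \emph{left} child strictly lowers the rank, every edge on this path must go to a right child. Because keys strictly increase as one follows right children, the root is exactly the node of rank $r$ with smallest key, a quantity determined by the keys and ranks alone.

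Having identified the root $x$ uniquely, symmetric order forces every node of key less than $x.\Key$ into the left subtree and every node of greater key into the right subtree. Each subtree inherits both symmetric order and rank-heap order with the same tie-breaking convention, hence is itself a zip tree, and by induction its structure is uniquely determined by its keys and ranks; reassembling the two subtrees under $x$ yields a uniquely determined tree, completing the induction. The main obstacle I anticipate lies in the tie-handling of the previous paragraph: one must exploit the asymmetry between the strict inequality for left children and the non-strict inequality for right children both to show that a unique root exists and to verify that the resulting choice (smallest key among the maximum-rank nodes) is consistent with heap order. This asymmetry is precisely the feature that the extension beyond the distinct-priority treap setting of Aragon and Seidel requires.
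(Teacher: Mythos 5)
Your proof is correct and follows essentially the same route as the paper's: induct on $n$, identify the root as the minimum-key node among those of maximum rank, and split the remaining nodes by key. The only difference is that you explicitly justify the root identification (via the right-spine argument for rank ties), a step the paper's proof asserts without elaboration.
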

\begin{proof}
The lemma is immediate by induction on $n$, since the root is the node of
largest rank whose key is smallest, and the nodes in the left and right
subtrees of the root are those with keys less than and greater than the key of
the root, respectively.
\end{proof}

By Lemma \ref{lem:UniqueTree}, a zip tree is \emph{history-independent}: its
structure depends only on the nodes it currently contains (and their ranks),
independent of the sequence of insertions and deletions that built it.

In our efficiency analysis we assume that each deletion depends only on the
sequence of previous insertions and deletions, independent of the node ranks.
(If an adversary can choose deletions based on node ranks, it is easy to build
a bad tree: insert items in arbitrary order; if any item has a rank greater
than $0$, immediately delete it. This will produce a path containing half the
inserted nodes on average.)

\begin{theorem}
\label{thm:ExpectedRank}
The expected rank of the root in a zip tree is at most $\lg n +3$. For any
$c>0$, the root rank is at most $(c+1)\lg n$ with probability at most $1-1/n^c$.
\end{theorem}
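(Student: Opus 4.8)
The plan is to reduce both statements to one observation: by the heap-order property, the rank of the root equals the maximum rank among all $n$ nodes. Following any root-to-node path downward the rank is nonincreasing (it strictly decreases when we pass to a left child and is nonincreasing when we pass to a right child), so no node can have rank exceeding that of the root. Writing $X_1,\dots,X_n$ for the independent, geometrically distributed node ranks and $R=\max_i X_i$ for the root rank, both claims then follow from standard facts about the maximum of independent geometric variables.

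First I would record the upper tail of a single rank. Since $\Pr[X_i=k]=1/2^{k+1}$, summing the geometric series gives $\Pr[X_i\ge k]=1/2^{k}$, and a union bound over the $n$ nodes yields the key inequality $\Pr[R\ge k]\le n/2^{k}$ for every nonnegative integer $k$. For the tail bound I would substitute $k=(c+1)\lg n$ (rounding up to an integer if necessary, which only helps), obtaining $n/2^{k}=n/n^{c+1}=1/n^{c}$. Hence $\Pr[R>(c+1)\lg n]\le 1/n^{c}$; equivalently, the root rank is at most $(c+1)\lg n$ with probability at least $1-1/n^{c}$.

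For the expected rank I would use the tail-sum identity $E[R]=\sum_{k\ge 1}\Pr[R\ge k]$ and split the sum at $t=\lceil\lg n\rceil$. The first $t$ terms I bound by $1$ each, contributing at most $\lceil\lg n\rceil\le \lg n+1$. For $k>t$ I apply $\Pr[R\ge k]\le n/2^{k}$ and sum the geometric tail to get $n/2^{t}\le 1$, using $2^{t}\ge n$. Adding these gives $E[R]\le \lg n+2$, comfortably within the claimed $\lg n+3$.

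The calculations are entirely routine; the only real content is the reduction showing that the root rank is \emph{exactly} the maximum of the node ranks, which is where I expect the argument to earn its keep. After that, the single-variable geometric tail and the union bound already control the upper tail tightly enough that no sharper concentration inequality is needed, and the only remaining care is in rounding $\lg n$ and $(c+1)\lg n$ to integers so that the constants come out clean.
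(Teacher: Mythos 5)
Your proposal is correct and follows essentially the same route as the paper: identify the root rank as the maximum of $n$ independent geometric ranks via heap order, apply a union bound for the tail, and split the tail-sum for the expectation (your split at $\lceil\lg n\rceil$ even yields the slightly better constant $\lg n+2$). You also correctly read the tail claim as ``with probability at \emph{least} $1-1/n^c$,'' which is what the paper's own proof establishes.
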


\begin{proof}
The root rank is the maximum of $n$ samples of the geometric distribution with
mean $1$. For $c > 0$, the probability that the root rank is at least $\lg n +
c$ is at most $n/2^{\lg n + c} = 1/2^c$. It follows that the expected root rank
is at most $\lceil \lg n \rceil + \sum_{i=1}^\infty i/2^i \le \lceil \lg n
\rceil +2 \le \lg n + 3$. For $c > 0$, the probability that the root rank
exceeds $(c + 1)\lg n$ is at most $1/2^{c \lg n} = 1/n^c$.
\end{proof}

Let $x$ be a node in a zip tree. If $y$ is on the search path for $x$ then $y$
is an \emph{ancestor} of $x$ and $x$ is a \emph{descendant} of $y$. The
\emph{low} (respectively \emph{high}) ancestors of $x$ are the ancestors of $x$
with key less than (respectively greater than) that of $x$.

\begin{lemma}
\label{lem:LowAncestors}
Let $x$ be a node and let $\ell$ be the number of low ancestors of $x$. Then
the expected value of $\ell$ is at most $k+1$, and for any $\delta > 0$, $\ell
\leq (1+\delta)k+1$ with probability at least $1-e^{-\frac{\delta^2
k}{2+\delta}}$.
\end{lemma}

\begin{proof}
If we order the low ancestors of $x$ in increasing order by key, they are in
non-increasing order by rank. We can think of these ancestors and their ranks
as being generated by coin flips in the following way. At each successive node
$y$ less than $x$ in decreasing key order we flip a fair coin until it comes up
tails and give $y$ a rank equal to the number of heads. Given such a $y$, let
$z$ be the low ancestor of smallest key greater than that of $y$ if there is
such a low ancestor; otherwise, let $z=x$. Then $y$ is a low ancestor of $x$ if
and only if its rank is at least the rank of $z$. We call the first $z.\Rank$
coin flips at $y$ \emph{irrelevant} and the rest \emph{relevant}.

For example, consider the tree on the left in Figure \ref{fig:InsertDelete},
and refer to nodes by their keys. If $x$ is node J, the successive nodes less
than $x$ in decreasing order are H, G, F, D. Initially $y$ is H and $z$ is J.
The flips at H are head, head, tail, giving H a rank of $2 \geq 0$, so H is a
low ancestor of J and is the next value of $z$. All three flips at H are
relevant. The next value of $y$ is G, for which the first flip is a tail,
giving G a rank of 0. Since $0 < 2$, G is not a low ancestor of J; $z$ stays
equal to H. The next value of $y$ is F, for which the flips are four heads
followed by a tail, making F a low ancestor of J and changing $z$ to F. The
first two flips are irrelevant and the last three are relevant. The last value
of $y$ is D, for which the flips are two heads followed by a tail, giving D a
rank of $2$. Since $2 < 4$, D is not a low ancestor of J.

Node $y$ is a low ancestor of $x$ if and only if at least one flip at $y$ is
relevant. The relevant flips are a sequence of Bernoulli trials in which the
number of tails is the number of low ancestors of $x$ produced so far and the
number of heads is at most the rank of the low ancestor of $x$ of highest rank
produced so far. Thus $\ell$ is the number of tails in a sequence of flips
containing at most $k$ heads and ending with a tail. Since the expected number
of tails preceding each head is $1$, $\ell \leq k +1$ in expectation: the
``$+1$'' counts the last tail. The second half of the lemma follows by a
Chernoff bound \cite{Chernoff}.
\end{proof}

\begin{lemma}
\label{lem:HighAncestors}
Let $x$ be a node and let $h$ be the number of high ancestors of $x$. Then the
expected value of $h$ is at most $k/2$, and for any $\delta > 0$, $h \leq
(1+\delta)k/2$ with probability at least $1-e^{-\frac{\delta^2
k}{2(2+\delta)}}$.
\end{lemma}

\begin{proof}
The proof is like that of Lemma \ref{lem:LowAncestors}. We think of generating
the high ancestors of x and their ranks by flipping a fair coin until it comes
up tails at each node $y$ greater than $x$ in increasing key order and giving
$y$ a rank equal to the number of heads. Given such a $y$, let $z$ be the high
ancestor of $x$ of largest key smaller than that of $y$, or $x$ if there is no
such high ancestor. We call the first $z.\Rank+1$ flips at $y$ irrelevant and
the rest relevant.

Node $y$ is a high ancestor of $x$ if and only if at least one flip at $y$ is
relevant. The relevant flips are a sequence of Bernoulli trials in which the
number of tails is the number of high ancestors of $x$ produced so far and the
number of flips is at most the rank of the high ancestor of $x$ of highest rank
produced so far. Thus, $h$ is the number of tails in a sequence of at most $k$
flips. This is at most $k/2$ in expectation. The second half of the lemma
follows by a Chernoff bound.
\end{proof}

\subsection{Expected Zip Tree Structure}

\begin{theorem}\label{thm:ExpectedDepth}
The expected depth of a node in a zip tree is at most $(3/2)\lg n + O(1)$. For
$c \ge 1$, the depth of a zip tree is $O(c \lg n)$ with probability at least $1
- 1/n^c$, where the constant inside the big ``$O$'' is independent of $n$ and
$c$.
\end{theorem}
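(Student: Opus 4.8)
The plan is to express the depth of a node as a sum of ancestor-indicator variables and reduce each indicator to a statement about ranks, as in the treap analysis but tracking the tie-breaking rule. Fix a node $v$ and write $\mathrm{depth}(v) = A_L + A_R$, where $A_L$ and $A_R$ count the ancestors of $v$ with smaller and larger key, respectively. First I would establish the key combinatorial fact, using Lemma~\ref{lem:UniqueTree}: for nodes with $u.\Key < v.\Key$, $u$ is an ancestor of $v$ if and only if $u.\Rank$ is no less than the rank of every node whose key lies in $(u.\Key, v.\Key]$; symmetrically, a node $u$ with $u.\Key > v.\Key$ is an ancestor of $v$ if and only if $u.\Rank$ strictly exceeds the rank of every node with key in $[v.\Key, u.\Key)$. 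The asymmetry between $\ge$ and $>$ is exactly the rule that ties are broken in favor of smaller keys. The cleanest justification is that among any set of nodes whose keys form a contiguous range, the one that wins under the order (higher rank, then smaller key) is their common ancestor, which follows by applying Lemma~\ref{lem:UniqueTree} to the subtree spanning that range.

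With this characterization, label the nodes of smaller key than $v$ as $u_1, u_2, \dots$ in order of decreasing key, and those of larger key as $w_1, w_2, \dots$ in order of increasing key. Then $u_i$ is an ancestor of $v$ exactly when its rank weakly dominates the ranks of $v, u_1, \dots, u_{i-1}$, an event of probability $p_i^{\ge} := \sum_{k\ge 0} 2^{-(k+1)}(1-2^{-(k+1)})^i$, and $w_j$ is an ancestor exactly when its rank strictly dominates the ranks of $v, w_1, \dots, w_{j-1}$, of probability $p_j^{>} := \sum_{k\ge 1} 2^{-(k+1)}(1-2^{-k})^j$. Hence, with $L$ and $R$ the numbers of nodes of smaller and larger key ($L+R=n-1$), $E[\mathrm{depth}(v)] = \sum_{i=1}^{L} p_i^{\ge} + \sum_{j=1}^{R} p_j^{>}$. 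Evaluating the geometric sums gives $p_i^{\ge} \sim (\lg e)/i$ and $p_j^{>}\sim (\lg e)/(2j)$, so I would bound $\sum_{i=1}^{m} p_i^{\ge} \le \lg m + O(1)$ and $\sum_{j=1}^{m} p_j^{>} \le \tfrac12\lg m + O(1)$ by comparison with a harmonic series. Since $L, R < n$, this yields $E[\mathrm{depth}(v)] \le \lg n + \tfrac12\lg n + O(1) = \tfrac32\lg n + O(1)$; the residual task is checking that the additive constants sum to at most $3$.

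For the high-probability bound I would prove a tail estimate on $A_L$ (and identically on $A_R$) and then union-bound over all $n$ nodes, so that a per-node failure probability of $n^{-(c+1)}$ gives overall probability $1 - n^{-c}$. The left-ancestors are the weak prefix-maxima (``records'') of the rank sequence read outward from $v$. To tame the dependence created by rank ties, decompose $A_L = \sum_k m_k$, where $m_k$ counts the left-ancestors of rank exactly $k$; a short calculation shows each $m_k$ is marginally geometric with mean $1$, as it counts the rank-$k$ nodes before the first node of rank exceeding $k$ when scanning outward. The number of nonempty levels is at most the maximum rank, which is $O(c\lg n)$ with probability $1 - n^{-(c+1)}$ by the argument of Theorem~\ref{thm:ExpectedRank}. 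An exponential-moment (Chernoff-type) bound on the sum of the per-level contributions then gives $A_L = O(c\lg n)$ with the same probability, and combining the two sides with the union bound finishes the proof.

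The step I expect to be the main obstacle is the tail bound for $A_L$: unlike the continuous (treap) case, weak records can occur in long runs of equal rank, so the record count is not a sum of independent Bernoulli indicators and does not concentrate for free. The level decomposition $A_L=\sum_k m_k$ isolates this difficulty, but the $m_k$ are not independent across levels, so making the Chernoff argument rigorous requires either a coupling that dominates the $m_k$ by independent geometrics or a direct bound on their joint generating function. Pinning down the additive constant $3$ in the expectation bound is also fiddly, but routine.
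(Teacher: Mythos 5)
Your decomposition into ancestors of smaller and larger key, and your rank-domination characterization (weak domination for smaller keys, strict for larger, reflecting the tie-break) are exactly the paper's starting point, but from there the two arguments genuinely diverge. You compute the exact marginal probability that the $i$-th node outward is an ancestor, $p_i^{\ge}=\sum_{k\ge0}2^{-(k+1)}(1-2^{-(k+1)})^i$ and $p_j^{>}=\sum_{k\ge1}2^{-(k+1)}(1-2^{-k})^j$, and sum a harmonic-type series. The paper instead encodes each rank as a run of coin flips, marks flips \emph{relevant} or \emph{irrelevant}, and observes that the number of ancestors on one side equals the number of relevant tails while the number of relevant heads is bounded by the root rank; the expectation bound then falls out of Theorem~\ref{thm:ExpectedRank}, which is how the paper gets the additive constant $3$ essentially for free, whereas in your route verifying that your $O(1)$ terms actually sum to at most $3$ is a real (if doable) computation you have deferred. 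For the tail bound the paper conditions on the root rank being at most $(c+2)\lg n$ and applies a single Chernoff bound to the relevant tails given at most that many relevant heads; your level decomposition $A_L=\sum_k m_k$ is a different but workable device, and the dependence issue you flag is real but resolvable exactly as you propose: the set of nodes reaching level $k+1$ is determined by the coin flips at levels at most $k+1$, while $m_{k+1}$ is decided by the fresh level-$(k+2)$ flips, so conditioned on $m_0,\dots,m_k$ the variable $m_{k+1}$ is stochastically dominated by an independent geometric with mean $1$, and a Chernoff bound on the sum of $O(c\lg n)$ dominating geometrics finishes the argument. Both routes are sound; the paper's relevant-flip bookkeeping buys the sharp constant and a one-step concentration argument, while yours buys explicit per-node ancestor probabilities (and would recover exact asymptotics via Prodinger's results, as the paper's Remark notes) at the cost of the two loose ends you correctly identify.
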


\begin{proof}
One can prove this theorem using results from \cite{Prodinger}, but for
completeness we prove it from scratch. The expected rank of the root is at most
$\lg n + 3$ by Theorem \ref{thm:ExpectedRank}. By Lemmas \ref{lem:LowAncestors}
and \ref{lem:HighAncestors}, the expected number of ancestors of a node $x$,
including $x$, is at most $(3/2)\lg n + O(1)$. The second half of the Theorem
follows from the high-probability bounds in Theorem \ref{thm:ExpectedRank} and
Lemmas \ref{lem:LowAncestors} and \ref{lem:HighAncestors}.
\end{proof}

By Theorem \ref{thm:ExpectedDepth}, the expected number of nodes visited during
a search in a zip tree is at most $(3/2)\lg n + O(1)$, and the search time is
$O(\log n)$ with high probability.

\begin{theorem}
\label{thm:ExpectedPathLengths}
If $x$ is a node of rank at most $k$, the expected number of nodes on the path
that is unzipped during its insertion, and on the two paths that are zipped
during its deletion, is at most $(3/2)k+2$. For any $\delta>0$, this number
is at most $(1+\delta)(3/2)k+2$ with probability at least
$1-2e^{-\frac{\delta^2 k}{2(2+\delta)}}$.
\end{theorem}

\begin{proof}
Let $x$ be a node of rank at most $k$. If $x$ is not in the tree but is
inserted, the nodes on the path unzipped during its insertion are exactly those
on the two paths that would be zipped during its deletion. Thus, we need only
consider deletion. Let $P$ and $Q$ be the two paths zipped during the deletion
of $x$, with $P$ containing the nodes of smaller key and $Q$ containing the
nodes of larger key. Let $y$ and $z$ be the predecessor and successor of $x$ in
key order, respectively. Then the nodes on $P$ are $y$ and the low ancestors of
$y$ of rank less than $x.\Rank$, and the nodes on $Q$ are $z$ and the high
ancestors of $z$ of rank at most $x.\Rank$. The theorem follows from Lemmas
\ref{lem:LowAncestors} and \ref{lem:HighAncestors}.
\end{proof}

\begin{theorem}
\label{thm:PointerChanges}
The expected number of pointer changes during an unzip or zip is $O(1)$. The
probability that an unzip or zip changes more than $k+O(1)$ pointers is at most
$2/c^k$ for some $c > 1$ independent of $k$.
\end{theorem}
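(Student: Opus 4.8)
The plan is to reduce the number of pointer changes to the length of the unzipped or zipped path, and then to feed that length into Theorem~\ref{thm:PathLengths}, averaging over the random rank of the node $x$ being inserted or deleted. First I would establish the structural estimate that an unzip or zip changes at most a constant more pointers than the number of nodes on its path(s). Consider the insertion of $x$ with its unzip into $P$ and $Q$. Unzipping preserves the left subtree of every node on $P$ and the right subtree of every node on $Q$, so the only child pointer that can change at a node on $P$ is its right pointer, and the only one at a node on $Q$ is its left pointer; each such node therefore accounts for at most one changed pointer. The remaining changes involve only $x$ (its two child pointers) and the old parent $z$ of $y$ (one child pointer). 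Hence the number of pointer changes is at most the number of nodes on the unzipped path plus $3$. The same reasoning applies to a zip: merging $P$ and $Q$ into $R$ alters at most one pointer per node of $R$, together with $O(1)$ pointers at the top, so the number of changes is again at most the number of nodes on the two zipped paths plus a constant.

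For the expected bound I would average the estimate of Theorem~\ref{thm:PathLengths} over the rank of $x$. That theorem gives, when $x.\Rank = k$, an expected path length of at most $(3/2)k + 1/2$. Since the rank is drawn from the geometric distribution with mean $1$, so that $x.\Rank = k$ with probability $1/2^{k+1}$ and $E[x.\Rank] = 1$, the unconditional expected path length is at most $\sum_{k \ge 0} (1/2^{k+1})\bigl((3/2)k + 1/2\bigr) = (3/2)E[x.\Rank] + 1/2 = 2$. Adding the constant from the structural estimate yields an expected number of pointer changes of at most $5$, which is $O(1)$.

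For the tail bound I would show that the path length itself has a geometric tail after averaging over the rank, whence so does the number of pointer changes. Combining the two sources of randomness is the delicate step. Theorem~\ref{thm:PathLengths} supplies, for each fixed rank $k$, a Chernoff bound of the form $\Pr[\text{path length} > O(k)] \le 1/(2c^k)$, while the rank itself has the geometric tail $\Pr[x.\Rank \ge k] = 1/2^k$. The naive split of the event $\{\text{path length} > m\}$ according to whether the rank is large or small only yields a tail of the form $1/c^{\sqrt{m}}$, which is too weak. To obtain a genuine geometric tail I would instead control the marginal distributions of $|P|$ and $|Q|$ directly: the coin-flipping description in the proof of Theorem~\ref{thm:PathLengths} exhibits $|P|$ as (stochastically dominated by) the number of tails accumulated before a number of heads governed by the geometric rank, so its probability generating function $E\bigl[s^{|P|}\bigr]$ is finite for $s$ in a neighborhood of $1$; the computation gives the bound $(2 - s)/(3 - 2s)$, finite for $s < 3/2$. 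This yields $\Pr[|P| > m] \le A\rho^m$ for some $\rho < 1$, and symmetrically for $|Q|$, so a union bound on $\{|P| > m/2\} \cup \{|Q| > m/2\}$ gives a geometric tail for the total path length, and hence $\Pr[\text{pointer changes} > k] \le 1/c^k$ for a suitable constant $c > 1$.

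The main obstacle, as indicated, is precisely this passage from the per-rank Chernoff bounds to a single clean geometric tail. The structural estimate and the expectation computation are routine once the preserved-subtree bookkeeping is laid out; the crux is resolving the compounded randomness of rank and path length, either through the generating function as above, or equivalently through a large-deviation form of the Chernoff bound that is exponential in the excess of the path length over its mean rather than merely in the rank.
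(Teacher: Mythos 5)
Your structural estimate and your expectation computation coincide with the paper's: the paper likewise bounds pointer changes by an additive constant plus the number of nodes on the unzipped or zipped path(s), reduces insertion to deletion, and sums $(3/2)k+1/2$ against the geometric weights $1/2^{k+1}$ to get the $O(1)$ expectation. Where you diverge is the tail bound, and your stated reason for diverging is mistaken. You claim that splitting the event $\{\text{path length}>m\}$ according to whether the rank is large or small ``only yields a tail of the form $1/c^{\sqrt{m}}$.'' That would be true only if the split point had to be balanced at $\sqrt{m}$, but Theorem~\ref{thm:PathLengths} gives, conditional on $\Rank(x)\le r$, that the path length exceeds $Br$ with probability at most $1/(2c^{r})$ for fixed constants $B$ and $c$; taking the split point $r=m/B$, \emph{linear} in $m$, yields $\Pr[\text{path}>m]\le \Pr[\Rank(x)>m/B]+1/(2c^{m/B})\le 2^{-m/B}+1/(2c^{m/B})$, a genuinely geometric tail. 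This linear-threshold split is exactly the paper's argument (its constants $a,b,d=a^{1/(ab)},g=2^{1/(ab)}$ are this computation written out). Your alternative route via the probability generating function is nevertheless correct: $|P|$ is dominated by the number of tails before the $k$th head with $k$ geometric, the computation $E[s^{|P|}]\le (2-s)/(3-2s)$ for $s<3/2$ is right, and Markov's inequality then gives an explicit geometric tail with concrete constants, which is arguably cleaner than chasing the unnamed Chernoff constants of Theorem~\ref{thm:PathLengths}. So the proof you propose works; only the claimed obstacle that motivates it does not exist.
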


\begin{proof}
The expected number of pointer changes is at most one plus the number of nodes
on the unzipped path during an insertion or the two zipped paths during a
deletion. For a given node $x$, these numbers are the same whether $x$ is
inserted or deleted. Thus we need only consider the case of deletion. The
probability that $x$ has rank $k$ is $1/2^{k + 1}$. Given that $x$ has rank
$k$, the expected number of nodes on the two zipped paths is at most $(3/2)k+2$ by Theorem \ref{thm:ExpectedPathLengths}. Summing over all possible
values of $k$ gives the first half of the theorem.

Choosing $\delta = 1$ in the second half of Theorem
\ref{thm:ExpectedPathLengths}, we find that if $x$ is a node of rank at most
$k/3$, the number of parent changes during its insertion or deletion is more
than $k + O(1)$ with probability at most $2/e^{k/18}$. Thus $c = e^{1/18}$
satisfies the second half of the theorem.
\end{proof}

By Theorem \ref{thm:PointerChanges}, the expected time to unzip or zip is
$O(1)$, and the probability that an unzip or zip takes $k$ steps is
exponentially small in $k$.

In some applications of search trees, each node contains a secondary data
structure, and making any change to a subtree may require rebuilding the entire
subtree, in time linear in the number of nodes. The following result implies
that zip trees are efficient in such applications.

\begin{theorem}\label{thm:Descendents}
The expected number of descendants of a node of rank $k$ is at most $3(2^k)-1$.
The expected number of descendants of an arbitrary node is at most $(3/2)\lg n
+ 2$.
\end{theorem}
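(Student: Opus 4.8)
The plan is to mirror the analysis of Theorem \ref{thm:ExpectedDepth}, exploiting the duality between ancestors and descendants: a node $y$ is a descendant of $x$ precisely when $x$ is an ancestor of $y$. Concretely, the descendants of $x$ split into those with larger key (the \emph{right descendants}) and those with smaller key (the \emph{left descendants}), together with $x$ itself. A node $y$ with $y.\Key > x.\Key$ is a right descendant exactly when $x$ is a low ancestor of $y$, i.e.\ every node with key in the interval $(x.\Key, y.\Key]$ has rank at most $x.\Rank$ (ties going to $x$, which has the smallest key in the interval). Symmetrically, a node $y$ with $y.\Key < x.\Key$ is a left descendant exactly when $x$ is a proper high ancestor of $y$, i.e.\ every node with key in $[y.\Key, x.\Key)$ has rank strictly less than $x.\Rank$ (now $x$ has the largest key, so ties go against it).

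For the first statement I would condition on $x.\Rank = k$ and count each side as a run length. Scanning nodes rightward from $x$ in increasing key order, the right descendants form the maximal initial run of nodes of rank at most $k$; since each node independently has rank exceeding $k$ with probability $1/2^{k+1}$, the expected length of this run is at most $(1 - 1/2^{k+1})/(1/2^{k+1}) = 2^{k+1} - 1$. Scanning leftward, the left descendants form the maximal initial run of nodes of rank strictly less than $k$; each node has rank at least $k$ with probability $1/2^{k}$, so the expected length is at most $2^{k} - 1$. Adding the two expectations and one for $x$ itself gives $(2^{k+1} - 1) + (2^{k} - 1) + 1 = 3(2^{k}) - 1$, as claimed. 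These geometric run-length computations assume an unbounded supply of nodes on each side, which can only overcount, so they remain valid upper bounds in a finite tree.

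For the second statement I would average the first over the rank distribution, but the naive sum $\sum_{k\ge 0} (1/2^{k+1})(3\cdot 2^{k}-1)$ diverges, so the finite size of the tree must be used: the number of descendants of any node is also trivially at most $n$. Thus the expected number of descendants of a fixed node is at most $\sum_{k \ge 0} (1/2^{k+1})\min\{3\cdot 2^{k}-1,\, n\}$. I would split this sum at the threshold $k^{*}$ where $3\cdot 2^{k}$ first exceeds $n$ (around $k^{*} \approx \lg(n/3)$). Below the threshold each term contributes about $3/2$, yielding roughly $(3/2)(\lg n - \lg 3 + 1)$; above the threshold the terms form a geometric tail summing to about $n/2^{k^{*}} < 3$. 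Combining the two pieces gives a bound of the form $(3/2)\lg n + O(1)$, and tracking the constants ($-\tfrac{3}{2}\lg 3 + \tfrac{3}{2} + 3 < 3$) yields the stated $(3/2)\lg n + 3$.

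The main obstacle is twofold. First, one must get the tie-breaking conventions exactly right so that the two sides are genuinely asymmetric --- rank at most $k$ on the right versus rank strictly less than $k$ on the left --- since this asymmetry is what produces the $2^{k+1}$ and $2^{k}$ terms and hence the factor $3$. Second, and more substantively, the divergence of the unconditioned rank sum means the second statement cannot be obtained by duality or by a term-by-term bound; the essential step is to truncate each conditional expectation at $n$ and verify that the geometric tail past the threshold contributes only a constant, so that the whole sum lands within the additive constant $3$.
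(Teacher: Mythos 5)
Your proposal is correct and follows essentially the same route as the paper: the first bound comes from the expected lengths of the two geometric runs (rank $<k$ on the small-key side, rank $\le k$ on the large-key side, the asymmetry reflecting the tie-breaking rule), giving $2^k-1$ and $2^{k+1}-1$ respectively, and the second bound comes from averaging over $k$ while truncating each conditional expectation at $n$. The paper states the truncation step in one sentence without working the constants; your split of the sum at $k^*\approx\lg n$ is the intended calculation.
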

\begin{proof}
Let $x$ be a node of rank $k$. Consider the nodes with key less than that of
$x$. Think of generating their ranks in decreasing order by key. The first such
node that is not a descendant of $x$ is the first one whose rank is at least
$k$. All nodes smaller than this one are also non-descendants of $x$. The
probability that a given node has rank at least $k$ is $1/2^k$. The probability
that the $i\textsuperscript{th}$ node is the first of rank at least $k$ is $(1
- 1/2^k)^{i - 1}/2^k$. The expected value of $i$ is the expected number of
flips of a biased coin to get a head, if the probability of a head is $p =
1/2^k$. By a standard calculation this number is $1/p = 2^k$. Thus the expected
number of descendants of $x$ of smaller key is at most $2^k-1$. (The expected
value of $i$ minus one is an overestimate because there are at most $n - 1$
nodes of key less than that of $x$ and they may all have smaller rank.)

Similarly, among the nodes with key greater than that of $x$, the first one
that is not a descendant of $x$ is the first one with rank greater than $k$. A
given node has rank greater than $k$ with probability $p = 1/2^{k + 1}$. The
probability that the $i\textsuperscript{th}$ node is the first of rank greater
than $k$ is $(1 - 1/2^{k + 1})^{i - 1}/2^{k + 1}$. The expected value of $i$ is
$2^{k + 1}$, so the expected number of descendants of $x$ of larger key is at
most $2^{k + 1} - 1$.

We conclude that the expected number of descendants of $x$, including $x$
itself, is at most $3(2^k) - 1$. The expected number of descendants of an
arbitrary node is the sum over all $k$ of the probability that the node has
rank $k$ times the expected number of descendants of the node given that its
rank is $k$. Using the fact that the number of descendants is at most $n$, this
sum is at most $(3/2)\lg n + 2$, since the terms for $k \leq \lg n - 1$ sum to at most $(3/2)\lg n$ and the terms for $k > \lg n - 1$ sum to at most $2$.
\end{proof}

\section{Comments and Extensions}\label{sec:Remarks}

Zip trees combine two independent ideas: the use of random ranks distributed
geometrically and the use of unzipping and zipping to perform insertion and
deletion. The former saves space as compared to treaps and makes zip trees
isomorphic to skip lists but more efficient. In practice, allocating a byte
(eight bits) per rank should suffice. The latter makes updates faster as
compared to using rotations. Either idea may be used separately.

As compared to other kinds of search trees with logarithmic search time, zip
trees are simple and efficient: insertion and deletion can be done purely
top-down, with $O(1)$ expected restructuring time and exponentially infrequent
occurrences of expensive restructuring. Certain kinds of deterministic balanced
search trees, in particular weak AVL trees and red-black trees, achieve these
bounds in the amortized case \cite{Haeupler}, but at the cost of somewhat
complicated update algorithms.
Zipping and unzipping make catenating and splitting zip trees simple. To
catenate two zip trees $T_1$ and $T_2$ such that all items in $T_1$ have
smaller keys than those in $T_2$, zip the right spine of $T_1$ and the left
spine of $T_2$. The top node of the zipped path is the root of the new tree. To
split a tree into two, one containing items with keys at most $k$ and one
containing items with keys greater than $k$, unzip the path from the root down
to the node $x$ with key $k$, or down to a missing node if no item has key $k$.
The top nodes of the two unzipped paths are the roots of the new trees.

If the rank of a node is a pseudo-random function of its key, then search and
insertion can be combined into a single top-down operation that searches until
reaching the desired node or the insertion position. Similarly, search and
deletion can be so combined. Furthermore ranks need not be stored in nodes, but
can be computed as needed. However, for our efficiency analysis to hold, this
approach requires the stronger independence assumption that the sequence of
insertions and deletions is independent of the function generating the
ranks.\footnote{This issue is not merely theoretical. Reuse of random seeds has
led to real-world ``denial-of-service'' attacks for a number of programming
libraries. See \url{http://ocert.org/advisories/ocert-2011-003.html}.}
One additional nice feature of zip trees is that deletion does not require swapping a
binary node for a leaf or unary node before deleting it, as in Hibbard deletion \cite{Hibbard}.
As compared to treaps, zip trees have an average height about 8\% greater. By
choosing the ranks using a geometric distribution with higher mean, we can
reduce this discrepancy, at the cost of increasing the number of bits needed to
represent the ranks. Whether this is worthwhile is a question for experimental
study.

We believe that the properties of zip trees make them a good candidate for
concurrent implementation. The third author developed a preliminary, lock-based
implementation of concurrent zip trees in his senior thesis \cite{Timmel}. We
plan to develop a non-blocking implementation.

\section{Implementations}\label{sec:Implementation}

In this section we present pseudocode implementing zip tree insertion and
deletion. We leave the implementation of search as an exercise. Our pseudocode
assumes an endogenous representation (nodes are items), with each node $x$
having a key $x.\Key$, a rank $x.\Rank$, and pointers to the left and right
children $x.\Left$ and $x.\Right$ of $x$ respectively. We give two
implementations designed to achieve different goals.

Our first goal is to minimize lines of code. This we do by using recursion. Our
recursive methods for insertion and deletion appear in Algorithm
\ref{impl:Recursive}. Method $\mathtt{insert}(x,\Root)$ inserts node $x$ into
the tree with root \emph{root} and returns the root of the resulting tree.  It requires that $x$ not be in the initial tree.
Once the last line of \texttt{insert} (``return \emph{root}'') in Algorithm \ref{impl:Recursive} is reached,
\texttt{insert} can actually return from the outermost call: all further tests
will fail, and no additional assignments will be done.
 Method
$\mathtt{delete}(x,\Root)$ deletes node $x$ from the tree with root \emph{root}
and returns the root of the resulting tree. It requires that $x$ be in the
initial tree. Unzipping is built into the insertion method; in deletion,
zipping is done by the separate method $\mathtt{zip}(x,y)$, which zips the
paths with top nodes $x$ and $y$ and returns the top node of the resulting
path. It requires that all descendants of $x$ have smaller key than all
descendants of $y$.

Our second goal is to do updates completely top-down and to minimize pointer
changes. This results in longer, less elegant, but more straightforward
methods. We treat \emph{root} as a global variable, with $\Root =
\mathtt{null}$ indicating an empty tree. Method $\mathtt{insert}(x)$ in
Algorithm \ref{list:IterativeInsert} inserts node $x$ into the tree with root
\emph{root}, assuming that $x$ is not already in the tree. Method
$\mathtt{delete}(x)$ in Algorithm \ref{list:IterativeDelete} deletes node $x$
from the tree with root \emph{root}, assuming it is in the tree.

These methods do some redundant tests and assignments to local variables. These
could be eliminated by loop unrolling, but might also be eliminated by a good
optimizing compiler.

\begin{algorithm}
  \DontPrintSemicolon
  \SetKwFunction{Insert}{insert}
  \SetKwFunction{Delete}{delete}
  \SetKwFunction{Zip}{zip}
  \BlankLine

  \Indm
  \nonl\hspace{0.5em}\Insert{$x$, $\Root$}:\;
  \Indp
  \lIf{$\Root=\Null$}{\{$x.\Left\gets x.\Right\gets\Null$; 
        $x.\Rank\gets\mathtt{RandomRank}$;
      return $x$\}}
  \eIf{$x.\Key<\Root.\Key$}
  {
    \If{$\text{\Insert{$x$, $\Root.\Left$}}=x$}
    {
      \lIf{$x.\Rank<\Root.\Rank$}{$\Root.\Left\gets x$}
      \lElse{\{$\Root.\Left\gets x.\Right$; $x.\Right\gets\Root$; return 
        $x$\}}
    }
  }
  {
    \If{$\text{\Insert{$x$, $\Root.\Right$}}=x$}
    {
      \lIf{$x.\Rank\le\Root.\Rank$}{$\Root.\Right\gets x$}
      \lElse{\{$\Root.\Right\gets x.\Left$; $x.\Left\gets\Root$; return 
        $x$\}}
    }
  }
  return $\Root$
  \BlankLine\BlankLine

  \Indm
  \nonl\hspace{0.5em}\Zip{$x$, $y$}:\;
  \Indp
  \lIf{$x=\Null$}{return $y$}
  \lIf{$y=\Null$}{return $x$}
  \lIf{$x.\Rank<y.\Rank$}{\{$y.\Left\gets\text{\Zip{$x$, $y.\Left$}}$; return $y$\}}
  \lElse{\{$x.\Right\gets\text{\Zip{$x.\Right$, $y$}}$; return $x$\}}
  \BlankLine\BlankLine

  \Indm
  \nonl\hspace{0.5em}\Delete{$x$, $\Root$}:\;
  \Indp
  \lIf{$x.\Key=\Root.\Key$}{return \Zip{$\Root.\Left$, $\Root.\Right$}}
  \eIf{$x.\Key<\Root.\Key$}
  {
    \If{$x.\Key=\Root.\Left.\Key$}
      {$\Root.\Left\gets\text{\Zip{$\Root.\Left.\Left$, $\Root.\Left.\Right$}}$}
    \lElse{\Delete{$x$, $\Root.\Left$}}
  }
  {
    \If{$x.\Key=\Root.\Right.\Key$}
    {$\Root.\Right\gets
      \text{\Zip{$\Root.\Right.\Left$, $\Root.\Right.\Right$}}$}
    \lElse{\Delete{$x$, $\Root.\Right$}}
  }
  return $\Root$
  \BlankLine

  \caption{Recursive versions of insertion and deletion.}
  \label{impl:Recursive}
\end{algorithm}

\newcommand\Prev{\mathit{prev}}
\newcommand\Cur{\mathit{cur}}
\newcommand\Fix{\mathit{fix}}
\newcommand\Next{\mathit{next}}

\begin{algorithm}
  \DontPrintSemicolon
  \BlankLine\BlankLine
  \Indm
  \nonl\hspace{0.5em}\Insert{$x$}\;
  \Indp
  $\Rank\gets x.\Rank\gets\mathtt{RandomRank}$\;
  $\Key\gets x.\Key$\;
  $\Cur\gets \Root$\;
  \While{$\Cur\ne\mathtt{null}$\emph{ and (}$\Rank<\Cur.\Rank$\emph{ or
    (}$\Rank=\Cur.\Rank$ \emph{and} $\Key>\Cur.\Key$\emph{))}}
    {
    $\Prev\gets\Cur$\;
    $\Cur\gets\text{if $\Key<\Cur.\Key$ then $\Cur.\Left$ else $\Cur.\Right$}$\;
    }
  \BlankLine

  \lIf{$\Cur=\Root$}{$\Root\gets x$}
  \lElseIf{$\Key<\Prev.\Key$}{$\Prev.\Left\gets x$}
  \lElse{$\Prev.\Right\gets x$}
  \BlankLine

  \lIf{$\Cur=\mathtt{null}$}{\{$x.\Left\gets
    x.\Right\gets\mathtt{null}$; return\}}
  \leIf{$\Key<\Cur.\Key$}{$x.\Right\gets\Cur$}{$x.\Left\gets\Cur$}
  $\Prev\gets x$\;
  \BlankLine

  \While{$\Cur\ne\mathtt{null}$}
  {
    $\Fix\gets\Prev$\;
    \eIf{$\Cur.\Key<\Key$}
      {
        \textbf{repeat} \{$\Prev\gets\Cur$; $\Cur\gets\Cur.\Right$\}\;
        \textbf{until} $\Cur=\mathtt{null}$ or $\Cur.\Key>\Key$\;
      }
      {
        \textbf{repeat} \{$\Prev\gets\Cur$; $\Cur\gets\Cur.\Left$\}\;
        \textbf{until} $\Cur=\mathtt{null}$ or $\Cur.\Key<\Key$\;
      }
    \eIf{$\Fix.\Key>\Key$\emph{ or (}$\Fix
      =x$\emph{ and }$\Prev.\Key>\Key$\emph{)}}
    {$\Fix.\Left\gets\Cur$}{$\Fix.\Right\gets\Cur$}
  }
  \BlankLine\BlankLine\BlankLine
  \caption{Iterative insertion.}
  \label{list:IterativeInsert}
\end{algorithm}

\begin{algorithm}
  \DontPrintSemicolon
  \BlankLine\BlankLine
  \Indm
  \nonl\hspace{0.5em}\Delete{$x$}\;
  \Indp
  $\Key\gets x.\Key$\;
  $\Cur\gets\Root$\;
  \While{$\Key\ne\Cur.\Key$}
  {
    $\Prev\gets\Cur$\;
    $\Cur\gets\text{if $\Key<\Cur.\Key$ then $\Cur.\Left$ else $\Cur.\Right$}$\;
  }
  $\Left\gets\Cur.\Left$; $\Right\gets\Cur.\Right$\;
  \BlankLine

  \lIf{$\Left=\mathtt{null}$}{$\Cur\gets\Right$}
  \lElseIf{$\Right=\mathtt{null}$}{$\Cur\gets\Left$}
  \lElseIf{$\Left.\Rank\ge\Right.\Rank$}{$\Cur\gets\Left$}
  \lElse{$\Cur\gets\Right$}
  \BlankLine

  \lIf{$\Root=x$}{$\Root\gets\Cur$}
  \lElseIf{$\Key<\Prev.\Key$}{$\Prev.\Left\gets\Cur$}
  \lElse{$\Prev.\Right\gets\Cur$}
  \BlankLine

  \While{$\Left\ne\mathtt{null}$\emph{ and }$\Right\ne\mathtt{null}$}
  {
    \eIf{$\Left.\Rank\ge\Right.\Rank$}
    {
      \textbf{repeat} \{$\Prev\gets\Left$; $\Left\gets\Left.\Right$\}\;
      \textbf{until} $\Left=\mathtt{null}$ or $\Left.\Rank<\Right.\Rank$\;
      $\Prev.\Right\gets\Right$\;
    }
    {
      \textbf{repeat} \{$\Prev\gets\Right$; $\Right\gets\Right.\Left$\}\;
      \textbf{until} $\Right=\mathtt{null}$ or $\Left.\Rank\ge\Right.\Rank$\;
      $\Prev.\Left\gets\Left$
    }
  }
  \BlankLine\BlankLine\BlankLine
  \caption{Iterative deletion.}
  \label{list:IterativeDelete}
\end{algorithm}

\clearpage

\begin{acks}
We thank Dave Long for carefully reading the manuscript and offering many
useful suggestions, most importantly helping us simplify the iterative
insertion and deletion algorithms. We thank Sebastian Wild for correcting the
bound on expected node depth in treaps in Section \ref{sec:PreviousWork} and
for his ideas on breaking rank ties. We thank Leon Sering for correcting the
statement and proof of Lemma \ref{lem:LowAncestors}. Finally, we are grateful
to Dominik Kempa for providing us with C++ zip-tree implementations,
benchmarks, and general comments. Research at Princeton University by the first
two authors was partially supported by an innovation research grant from
Princeton and a gift from Microsoft.
\end{acks}

\printbibliography

\end{document}